\newtheorem{assum}{Assumption}
\newtheorem{theorem}{Theorem}
\newtheorem{lemma}{Lemma}
\newtheorem{property}{Property} 
\newcommand{\conn}{\lambda_2}
\newcommand{\vect}[1]{\boldsymbol{#1}}
\newcommand{\cbf}{\conn(x) - \epsilon}
\DeclareMathOperator*{\argmin}{argmin}
\definecolor{color1}{rgb}{0,0.4470,0.7410}
\definecolor{color2}{rgb}{0.8500,0.3250,0.0980}
\definecolor{color3}{rgb}{0.9290,0.6940,0.1250}
\definecolor{color4}{rgb}{0.4940,0.1840,0.5560}
\definecolor{color5}{rgb}{0.4660,0.6740,0.1880}
\definecolor{color6}{RGB}{19,122,16}
\definecolor{my_orange}{RGB}{255,121,3}
\definecolor{my_light_green}{RGB}{32,255,42}
\definecolor{my_yellow}{RGB}{255,255,32}
\definecolor{my_light_blue}{RGB}{50,255,255}
\definecolor{my_blue}{RGB}{50,120,255}
\begin{document}
\title{\LARGE \bf
	Connectivity Maintenance: Global and Optimized approach through Control Barrier Functions }

\author{Beatrice Capelli, Lorenzo Sabattini%
	%
	%
	%
%	\thanks{Manuscript received: September, 10, 2017; Revised December, 15, 2017; Accepted January, 31, 2018.}%Use only for final RAL version
%	\thanks{This paper was recommended for publication by Editor Yasuyoshi Yokokohji upon evaluation of the Associate Editor and Reviewers' comments.
%} %Use only for final RAL version
	\thanks{Authors are with the Department of Sciences and Methods for Engineering (DISMI), University of Modena and Reggio Emilia, Italy {\tt\small{\{beatrice.capelli, lorenzo.sabattini\}@unimore.it}}}%
%	\thanks{Digital Object Identifier (DOI): see top of this page.}
}

%\author{Lorenzo Sabattini~\IEEEmembership{Senior Member,~IEEE}, Cristian Secchi~\IEEEmembership{Senior Member,~IEEE}, Beatrice Capelli and Cesare Fantuzzi~\IEEEmembership{Member,~IEEE}%<-this % stops a space
%\thanks{ 
%Authors are with the Department of Sciences and Methods for Engineering (DISMI), University of Modena and Reggio Emilia, Italy {\tt\small{\{lorenzo.sabattini, cristian.secchi, cesare.fantuzzi\}@unimore.it}}}
%}
%

\maketitle

\begin{abstract}                          % Abstract of not more than 250 words.
Connectivity maintenance is an essential aspect to consider while controlling a multi-robot system. In general, a multi-robot system should be connected to obtain a certain common objective. Connectivity must be kept regardless of the control strategy or the objective of the multi-robot system. Two main methods exist for connectivity maintenance: keep the initial connections (local connectivity) or allow modifications to the initial connections, but always keeping the overall system connected (global connectivity). 
%\bc{In this paper we build a Control Barrier Function to maintain global connectivity, regardless of the control strategy implemented. 
%We provide a mathematical proof of the method and some practical examples to demonstrate its versatility.}
In this paper we present a method that allows, at the same time, to maintain global connectivity and to implement the desired control strategy (e.g., consensus, formation control, coverage), all in an optimized fashion. For this purpose, we defined and implemented a Control Barrier Function that can incorporate constraints and objectives. We provide a mathematical proof of the method, and we demonstrate its versatility with simulations of different applications.
\end{abstract}

%\begin{IEEEkeywords}
%	TODO
%\end{IEEEkeywords}

\section{Introduction}
%connettività + barrier function
%\bc{Questa introduzione non mi piace -.-''}
Multi-robot systems are becoming more and more frequent in a wide range of domains, such as industrial~\cite{ram2017}, agricultural~\cite{ball2015}, marine~\cite{hajieghrary2016,kemna2018}, and aerial~\cite{saska2017}.
%
%
% applications \bc{(e.g., patrolling, search and rescue, monitoring)} and they will increase in the near future. 
Communication among the robots is crucial for most cooperative applications, such as search and rescue~\cite{hayat2017multi}, patrolling~\cite{portugal2011survey, pasqualetti2012cooperative}, or exploration~\cite{cortes2010coverage, burgard2005coordinated}.
%
% certain applications, such as search and rescue~\cite{cortes2010coverage, hayat2017multi}, patrolling \bc{sgvseroòvmwerpvjwervbwerbioòwerjbiopaj Non so come mettere giù questa parte}. 
% 
Hence, the controller for a multi-robot system should take into account communication, which is usually represented by a communication graph, and should be able to maintain connectivity (i.e., possibility for the robots to exchange information among each other).
%\bc{i.e., ability to exchange information among each other EACH OTHER CHI? FORSE: among the robots? / the robots can exchange information among each other}).% and with an eventual central station).

%keep the communication between the robots (connectivity) is a fundamental characteristic to maintain during the activity of the robots. In fact, connectivity allows the system to communicate information between all the robots and among the robots and an eventual central station where the data are memorized, in the case of monitoring, or analyzed, in the case of patrolling \bc{citare}. 
However, in several applications of interest, the desired input, namely the input defined to achieve the desired objective, 
%
%some of these applications the desired input, namely the input that maximizes the given task,  
can lead to a disconnection of the group. For example, if we want to perform measurements %measure a certain data 
with sensors and maximize the area we analyze, we should control the robots in such a way that they spread as much as possible all over the area. But, as they usually have a limited range of communication, we may cause the disconnection of the group. 
%Nowadays the abilities of a single robot are enough to complete tasks by itself, but multi-robot system can guarantee a lot of advantages, such as heterogeneity, redundancy, lower cost for each robot.
% \bc{the more the robots spread in the area and the more we analyze, but in this way we can disconnect the robots.} Given the actual capabilities of a single robot, it can work alone, but \bc{they} loose the advantage of working in a team. \bc{spiegare}

In general, to maintain connectivity there are two approaches: local and global. The local approach consists in keeping the connection between two robots if it exists when the task begins, so if the group starts connected it will be connected during the overall task. The main disadvantage of this method is the low flexibility due to the inability to change the communication graph. However, it can be easily implemented in a decentralized manner, thanks to the fact that it only needs local information, and in literature there exist a lot of examples~\cite{ji2007distributed, dimarogonas2008decentralized, ajorlou2010class}. On the other side, global connectivity consists in keeping the group connected allowing a disconnection between the robots, if such disconnection does not lead to a loss of communication in the group. This method allows a higher flexibility, allowing the robots to re-arrange their interconnections to accommodate external needs, such as the presence of obstacles that obstruct their motion (e.g., a narrow passage). %\bc{for example if the multi-robot system starts in an initial configuration of a square and it should pass through a narrow passage, with the global approach it can change its shape (i.e., its communication graph), maintaining connectivity, while with the local approach it may not pass because it must keep the initial communication graph, in order to maintain the initial connections.} 
In~\cite{yang2010decentralized, sabattini2013decentralized, li2013bounded} a decentralized connectivity maintenance is built through a gradient descent approach, while a recent comparison between the two approaches can be found in~\cite{khateri2019comparison}. In order to increase the flexibility and the efficiency of the multi-robot system, we aim at considering global connectivity preservation together with some desired control input, and combine them in such a way to optimize the task while maintaining communication constraints.

%
%
%In order to give even more flexibility to the multi-robot system, we want to include global connectivity and a desired input in such a way to optimize the task maintaining communication constraint.
% INTERMITTENT? More recent works introduce the possibility of a temporary disconnection of the robots 

Control Barrier Functions (CBFs) are a suitable method for our problem. In fact, they allow to consider, at the same time, different objectives and constraints. CBFs are commonly used for safety critical applications, such as in the automotive field where they are used for adaptive cruise control~\cite{ames2014control} and line keeping~\cite{xu2017realizing}. Another field of application is human-robot interaction, where CBFs allow to respect some safety constraints and, at the same time, optimize the task~\cite{landi2019safety}. In the multi-robot domain, CBFs are used in~\cite{wang2017safety} to guarantee collision-free movement of a group of mobile robots. In~\cite{egerstedt2018robot} they are exploited to combine different aspects: safety, connectivity, coverage and energy management. Moreover, in~\cite{wenhao2019} CBFs are used to optimize connectivity and coverage. In~\cite{egerstedt2018robot, wenhao2019} the approach towards connectivity is local because the CBFs focus on maintaining all the initial links between the robots. 

%In general, CBFs allow to consider constraints or objectives in an optimized fashion. 

In this paper we propose to integrate global connectivity into a desired control (e.g., coverage, formation control, patrolling), by means of a CBF. This method allows to optimize the desired behavior of the system and, at the same time, to maintain connectivity. The optimization process allows to avoid the tuning procedure,  which is usually present when two or more objectives or constraints are considered in the same controller~\cite{siligardi2019robust}. Like in other global approaches, the initial configuration should generate a connected graph. But, differently from others (e.g.,~\cite{sabattini2013decentralized}), where the initial value of connectivity should be above a certain desired value, we are able to increase connectivity to the desired value, or above. This property is due to the fact that the chosen CBF is also a Control Lyapunov Function (CLF). 
%, or above if the desired controller is not dispersive, namely it does not carry the group to a disconnection.

%In fact, CBF consists in defining a forward invariant set, namely a condition in which the system is kept during the task. In addition, the CBF we chose is also a Control Lyapunov Function (CLF) and it allows to keep connectivity also if the system does not start in the set, although it should always start in a connected configuration. Another advantage of the proposed method is that it does not need any tuning procedure, differently from~\cite{siligardi2019robust}.   

In the paper we give a general introduction of connectivity and of CBFs, we give the demonstration of the implemented method and an extensive evaluation through simulations. 
\section{Notation}
The symbol $\mathcal{L}$ will represent the set of locally Lipschitz functions. $\mathbb{R}$, $\mathbb{R}^+_0$ are the set of real and real non-negative numbers, respectively. A continuous function \mbox{$\Omega(\cdot): \mathbb{R}^+_0 \rightarrow \mathbb{R}^+_0$} is a class $\mathcal{K}$ function if it is strictly increasing and $\Omega(0) = 0$. It is an extended class $\mathcal{K}$ if it is a class $\mathcal{K}$ function, and it is defined on the entire real line, e.g., $\Omega(\cdot): \mathbb{R} \rightarrow \mathbb{R}$.

\section{Background}

%In this section we introduce the two main theoretical instruments that we used in this work: connectivity, derived from graph theory, and CBFs. 

\subsection{Connectivity}
\label{sec:conn}
We consider a group of robots, which have a limited communication range $R$. The communication network is represented by a communication graph\footnote{All the formulation is time dependent because the communication network can change over time (e.g., $\mathcal{G} = \mathcal{G}(t)$). However, when not strictly necessary, we omit dependence on time, for ease of notation.} $\mathcal{G}=\{ V, E\}$, where the set of vertices $V$ represents the robots and the set of the edges $E$ consists of the existing communication links. An edge $e_{i,j}$ exists if and only if the $i$-th and the $j$-th robot are within the communication range $R$. The existence of the edge $e_{i,j}$ implies also the existence of the edge $e_{j,i}$, because we are considering an undirected graph. We define the set of neighbors of the $i$-th robot as $\mathcal{N}_i = \{ j \in V |  e_{i,j} \in E \}$, which are all the robots that can communicate with it. 

Algebraic connectivity of the system is available directly from the mathematical representation of the graph $\mathcal{G}$. Firstly, we introduce the adjacency matrix $A \in \mathbb{R}^{N \times N}$:
\begin{equation}
A = \begin{cases}
a_{i,j} > 0 & \text{if } j \in \mathcal{N}_i \\
0 & \text{otherwise}
\end{cases}
\end{equation}
where $a_{i,j}$ is the edge weight of $e_{i,j}$. 

Then we define the degree matrix $D \in \mathbb{R}^{N \times N}$, which is a diagonal matrix and each element of the diagonal is equal to $\psi_{i,i} = \sum_{j=1}^{N}a_{i,j}$. Hence, $D = diag\{\psi_{i,i}\}$.

Combining the adjacency and the degree matrix, we obtain the (weighted) Laplacian matrix $L = D - A$. The Laplacian matrix has some important properties~\cite{mohar1991laplacian}:
\begin{itemize}
	\item $L \vect{1} = 0$, which means that the first eigenvalue of the Laplacian is zero and the associated eigenvector is $\vect{1}$, namely the column vector of all ones;
	\item Considering the set of eigenvalues of the Laplacian matrix ($\lambda_1, \dotsc, \lambda_N $) we know that:
	\begin{itemize}
		\item  They can be ordered as: $0= \lambda_1 \leq \lambda_2 \leq \dotsc \leq \lambda_N$;
		\item The second eigenvalue $\lambda_2$ represents the algebraic connectivity of the graph, and we have $\lambda_2 > 0$ if and only if $\mathcal{G}$ is connected;
%		\\item The multiplicity of $0$ as an eigenvalue of $L$ represents the number of components of $\mathcal{G}$;
%		\item Considering a weighted graph, we know that $\conn$ is a non-decreasing function of each edge weight.
	\end{itemize}
\end{itemize}

\subsection{Control Barrier Functions}
\label{sec:cbf}
In this section we introduce the concept of Control Barrier Functions (CBFs), which are an extension of Barrier Functions for systems that do not present a forward invariant set for the desired state of the system. For more details the reader is referred to~\cite{ames2016control}.

We consider an affine control system:
\begin{equation}
\label{eq:affine}
\dot{\chi} = f(\chi) + g(\chi)\mu
\end{equation}
where $\chi \in \mathbb{R}^p$ represents the state of the system, \mbox{$\mu \in U \subseteq \mathbb{R}^q$} is the control input, $f(\chi), g(\chi) \in \mathcal{L}$.

What CBFs aim to do is keeping, or eventually carrying, the system~\eqref{eq:affine} in a certain closed set $\mathcal{C} \subset \mathbb{R}^p$ defined as:
\begin{equation}
\label{eq:set}
\begin{aligned}
\mathcal{C} &= \{ \chi \in \mathbb{R}^p | h(\chi) \geq 0 \} \\
\partial \mathcal{C} &= \{ \chi \in \mathbb{R}^p | h(\chi) = 0 \} \\
\text{Int}(\mathcal{C}) &= \{ \chi \in \mathbb{R}^p | h(\chi) > 0 \}
\end{aligned}
\end{equation} 
where $h(\chi) : \mathbb{R}^p \rightarrow \mathbb{R}$ is a continuously differentiable function. At this point there are two different approaches to define the control strategy: Reciprocal Control Barrier Functions (RCBFs) and Zeroing Control Barrier Functions (ZCBFs). For RCBFs we must define a function $B(\chi)$ that tends to infinity while the state approaches the boundary of the set $\mathcal{C}$, while for ZCBFs the function $h(\chi)$ must tends to zero on the boundary. We decided to use ZCBFs because they do not present the drawbacks of a function that assumes unbounded values on the boundary and they allow also to consider some perturbations or errors in the system (defining the function $h(\chi)$ on a set $\mathcal{D}$ larger than $\mathcal{C}$, $\mathcal{C} \subseteq \mathcal{D} \subset \mathbb{R}^p$). In addition, under certain conditions, they can be straightforwardly transformed in Control Lyapunov Functions, because they are well defined also outside of the set $\mathcal{C}$. In the following we will use the term CBF meaning ZCBF. 

The CBF should keep the state of the system inside\footnote{Formally, to make the set $\mathcal{C}$ \textit{forward invariant} (i.e., if the system starts in whatever state $\chi(0) \in \mathcal{C}$, then $\chi(t) \in \mathcal{C}, \text{ } \forall t > 0$).} the set $\mathcal{C}$, and this is guaranteed choosing the input $\mu$ in~\eqref{eq:affine} in such a way that:
\begin{equation}
\label{eq:derivative}
\frac{d}{dt} h(\chi) \geq -\alpha\left(h(\chi)\right)
\end{equation} 
where $\alpha(\cdot) : \mathbb{R}^p \rightarrow \mathbb{R}^p$ is an extended class $\mathcal{K}$ function\footnote{Typically, $\alpha(h)$ is chosen equal to a linear function $\alpha(h(\chi))=k h(\chi)$ or to the cube $\alpha(h(\chi)) = k h(\chi)^3$, with $k>0$.} and $\alpha(\cdot) \in \mathcal{L}$. Using~\eqref{eq:affine}, we can rewrite~\eqref{eq:derivative} as an explicit function of $\mu$, as follow:
\begin{equation*}
\frac{d}{dt} h(\chi) = \frac{\partial h(\chi)}{\partial \chi}\dot{\chi} = L_f h(\chi) + L_g h(\chi)\mu \geq -\alpha\left(h(\chi)\right)
\end{equation*}
where $L_f$ and $L_g$ are the Lie derivatives of $h(\chi)$: \mbox{$L_f h(\chi) = \frac{\partial h(\chi)}{\partial \chi} f(\chi)$,} $L_g h(\chi) = \frac{\partial h(\chi)}{\partial \chi} g(\chi)$.

Then, the function $h(\chi)$ is a CBF if the following properties are satisfied~\cite{ames2016control}: 
\begin{property}\label{prop:contdiff}
	$h(\chi)$ is continuously differentiable.
\end{property}
\begin{property}\label{prop:reldeg}
	$h(\chi)$ is of relative degree one (i.e., its first order time derivative depends explicitly on the control input).
\end{property} 
\begin{property}\label{prop:existence_input}
	It is possible to find an extended class $\mathcal{K}$ function $\alpha\left(h(\chi)\right)$ such that:
	\begin{equation}
	\label{eq:sup}
	\sup\limits_{u \in U} \left[ L_f h(\chi) + L_g h(\chi)\mu + \alpha\left(h(\chi)\right) \right] \geq 0, \text{ } \forall \chi \in \mathcal{D}
	\end{equation} 
\end{property}
In order to make the set $\mathcal{C}$ forward invariant, we can apply any controller $\mu(\chi): \mathcal{D} \rightarrow U \text{, } \mu(\chi) \in \mathcal{L}$ such that \mbox{$\mu(\chi) \in K_{cbf}(\chi)$}. The set $K_{cbf}(\chi)$ is defined as:
\begin{equation*}
K_{cbf}(\chi) = \{ \mu \in U | L_f h(\chi) + L_g h(\chi)\mu + \alpha\left(h(\chi)\right) \geq 0 \}
\end{equation*}
It is important to notice that the closed loop controller $\mu(\chi) \in K_{cbf}(\chi)$ does not ensure that the system will converge to the given set $\mathcal{C}$ if it starts outside of it. 

For this purpose, we can introduce a Control Lyapunov Function (CLF), which is a continuously differentiable, positive definite function $\Phi(\chi): \mathcal{W} \subset \mathbb{R}^p \rightarrow \mathbb{R}^+_0$ such that:
\begin{equation*}
\inf\limits_{\mu \in U} \left[ L_f \Phi(\chi) + L_g\Phi(\chi)\mu \right] \leq -\zeta\left(\Phi(\chi)\right), \text{ } \forall \chi \in \mathcal{W}
\end{equation*}
where $\zeta$ is a class $\mathcal{K}$ function. Hence, it is possible to find an input $\mu \in U$ that stabilizes a point $\chi* \in \mathbb{R}^p$, or a set. More details can be found in~\cite{khalil2002nonlinear}. 
%such that the orbital derivative of the CLF is negative definite or semidefinite. The CLF stabilizes the system to a certain point $\chi*$, for which $\Phi(\chi*) = 0$: the point $\chi*$ is stable if $\dot{\Phi}(\chi,\mu) \leq 0$ or asymptotically stable if $\dot{\Phi}(\chi,\mu) < 0$. \ls{$\dot{\Phi}$ va valutata in $\chi$ o in $\chi*$?} 

\section{System definition and problem statement}\label{sec:prob}
We consider a system composed by $N$ robots moving in a $n$-dimensional space, and we define the position of the $i$-th robot as $x_i \in \mathbb{R}^n$. Then we can stack all the positions in a vector, which will represent the state of the system: $x = \left[ x_1^T, \dotsc, x_N^T \right]^T \in \mathbb{R}^{nN}$.
%\begin{equation}
%\label{eq:position}
%
%\end{equation}
%We consider a multi-robot system composed of $N$ robots moving in a $n$-dimensional space, and we define the state of the system as:
%\begin{equation}
%x = p \in \mathbb{R}^{nN}
%\end{equation}
%where $p$ is defined as in~\eqref{eq:position}. 
%
We consider single-integrator dynamics:
\begin{equation}
\label{eq:sinint}
\dot{x} = u
\end{equation}
where $u \in U \subseteq \mathbb{R}^{nN}$ is the control input of the system. Hence,~\eqref{eq:sinint} can be represented by the general model of an affine control system, reported in~\eqref{eq:affine}, with
%
%the functions in~\eqref{eq:affine} result:
%\begin{equation}
\mbox{$f(x) = \mathbb{O} \in \mathbb{R}^{nN \times nN}$}, $g(x) = \mathbb{I} \in \mathbb{R}^{nN \times nN}$,
%\end{equation}
where $\mathbb{O}$ and $\mathbb{I}$ are, respectively, the null and the identity matrix.

As in~\cite{tro2017}, we define the edge weights of the graph $\mathcal{G}$ as: % \bc{bisogna citare il paper da dove lo abbiamo preso?}:
\begin{equation}
\label{eq:weight}
a_{i,j} = \begin{cases}
e^{\left(R^2 - d_{ij}^2\right)^2/\sigma} -1 & \text{if\ } d_{i,j} \leq R \\
0 & \text{otherwise}
\end{cases}
\end{equation}
where $d_{i,j} = \| x_i - x_j \|$ is the Euclidean distance between the $i\text{-th}$ and the $j$-th robot, $R >0$ is the communication distance, and $\sigma>0$ is a positive constant to set the edge weight $a_{i,j} \leq 1$, for normalization purpose. The above definition of the weight respects the following constraints:
\begin{itemize}
	\item $a_{i,j} \geq 0$ to have similar properties to the unweighted Laplacian~\cite{mohar1991laplacian}, and $a_{i,j} = 0$ if $d_{i,j} > R$;
	\item continuously differentiable with respect to the distance between the robots;
	\item decreasing as the distance increments.
\end{itemize}
%We will hereafter assume the eigenvalues are simple (i.e., $\lambda_i\neq\lambda_{i+1}, \forall i \in \left[2, \dotsc, N-1 \right]$): according to the results in~\cite{poignard2018spectra}\footnote{As shown in~\cite{poignard2018spectra}, eigenvalues of the Laplacian matrix can be always made simple imposing a small perturbation in the edge weights.}, this assumption is almost always verified in the considered scenario, where edge weights, defined in~\eqref{eq:weight}, are function of the inter-robot distances (which are time varying).

We will hereafter assume that the Laplacian matrix has simple eigenvalues, i.e., $\lambda_i\neq\lambda_{i+1}, \forall i \in \left[2, \dotsc, N-1 \right]$.
% As reported in~\cite{poignard2018spectra}, such eigenvalues can be always made simple imposing a small perturbation in the edge weights. This hypothesis is verified thanks to the definition of the edge weights~\eqref{eq:weight}. Indeed, they depend on the inter-robot distances that are time varying, and hence edge weights are always subject to a small perturbation. 

This assumption is almost always verified in the considered scenario, in which edge weights are defined, in~\eqref{eq:weight}, as a function of the time-varying inter-robot distance. In fact, as shown in~\cite{poignard2018spectra}, in the pathological case in which the Laplacian has multiple eigenvalues, a small perturbation (introduced, in our case, by the motion of the robots) in the edge weights is sufficient to recover the simple eigenvalues situation.

%
%as shown in~\cite{poignard2018spectra},\bc{ a perturbation in the edge weights guarantees simplicity of the eigenvalues. Therefore, being edge weights defined in~\eqref{eq:weight} as function of the inter-robot distances (which are time varying), eigenvalues are guaranteed to be almost always simple.}

In this paper, we aim at solving the following problem: a multi-robot system, starting from an initial configuration where the group is connected ($\conn(x(0)) > 0$), remains connected, while being controlled to achieve some desired objective. % whatever type of control is applied to the system. 

From a mathematical point of view, we want to minimize the difference between the desired control input \mbox{$u_{des} \in U \subseteq \mathbb{R}^{nN}$} and the actual input $u \in U \subseteq  \mathbb{R}^{nN}$, which should ensure the connectivity of the system, encoded by $h(x)$. 
This problem will be solved, exploiting the CBF method, by means of a Quadratic Program (QP) subject to constraints:
\begin{equation}
\label{eq:qp}
\begin{aligned}
u(x) = \argmin\limits_{u \in \mathbb{R}^{nN}} \quad & \frac{1}{2}\| u - u_{des}(x) \|^2\\
\text{s.t. } \quad & L_f h(x) + L_g h(x) u \geq - \alpha \left(h(x)\right) \\
& u(x) \in \mathcal{L}
\end{aligned}
\end{equation}

\section{CBF for connectivity maintenance}

In order to solve the problem stated in Section~\ref{sec:prob}, we propose the following candidate CBF:
%First of all, we define a suitable CBF:
\begin{equation}
\label{eq:cbf}
h(x) = \cbf
\end{equation}
where $\conn(x)$ is the second eigenvalue of the Laplacian matrix %(as described in Section~\ref{sec:conn}) 
and $\epsilon > 0$ is introduced in order to keep connectivity above a desired value. In the following, we will omit the dependence of $\conn$ from the state $x$, for ease of notation.% We remind also that the full discussion is time dependent, \bc{hence $\conn(x,t) = \conn$}. 

The CBF $h(x)$ given in~\eqref{eq:cbf} defines the set $\mathcal{C}$, described as in~\eqref{eq:set}, in which we want to keep our system:
\begin{equation*}
\mathcal{C}= \{x \in \mathbb{R}^{nN} | \conn \ge \epsilon \}
\end{equation*}
In addition, the CBF~\eqref{eq:cbf} is defined on a set $\mathcal{D}$, with $\mathcal{C} \subset \mathcal{D}$:
\begin{equation*}
\mathcal{D} = \{ x \in \mathbb{R}^{nN} | \conn > 0 \}  
%\mathcal{D} = \{ x \in \mathbb{R}^{nN}\}
\end{equation*} 
%Actually, it can be defined in $\mathbb{R}^{nN}$, but the states that describe a disconnected configuration can not be controlled in such a way to regain connectivity. 
This means that, in the following, we assume that the system always starts in a connected configuration.%, which is a plausible hypothesis. %\bc{Intermittent?}
%Thanks to the property of the Laplacian matrix~\ref{sec:conn}, the chosen CBF is defined for every possible state of the system. \bc{Forse nella CLF andrà messo qualche vincolo?}

%For doing this we must choose a CBF $h(x)$ that encodes the constraint on the connectivity of the system. 
%The set $\mathcal{C}$, described as in~\eqref{eq:set}, in which we want to keep the system, is:
%\begin{equation}
%\mathcal{C}= \{x \in \mathbb{R}^{nN} : \conn \ge \epsilon \}
%\end{equation}
%We chose as a candidate CBF \bc{prima la CBF, da cui poi viene fuori il set}: 
%\begin{equation}
%\label{eq:cbf}
%h(x) = \cbf
%\end{equation}
%where $\conn$ is the second eigenvalue of the Laplacian matrix (as described in~\ref{sec:conn}) and $\epsilon $ is a positive constant that is introduces in order to keep connectivity above a desired value. 
%The CBF is defined on a set $\mathcal{D}$, with $\mathcal{C} \subset \mathcal{D}$:
%\begin{equation}
%\mathcal{D} = \{ x \in \mathbb{R}^{nN} : \conn \geq 0 \}  \Rightarrow \mathcal{D} = \{ x \in \mathbb{R}^{nN}\}
%\end{equation} 
%Thanks to the property of the Laplacian matrix \ref{sec:conn}, the chosen CBF is defined for every possible state of the system. 

We will now demonstrate the existence of a solution to the QP problem~\eqref{eq:qp}, choosing $\alpha\left(h(x)\right) = \varphi \cdot h(x)$, with $\varphi > 0$.
%
%We consider the constraint of the QP problem~\eqref{eq:qp}, where we choose $\alpha = \varphi \geq 0$. 
Without loss of generality, in the following we will consider $\varphi = 1$, for ease of notation. Then we obtain:
\begin{equation}
\label{eq:full_constraint}
\frac{\partial \conn}{\partial x} u + \left( \conn - \epsilon \right) \ge 0 
\end{equation}
We rename the partial derivative with $\beta$:
\begin{equation}
\label{eq:beta}
\beta = \left[\beta_1, \ldots, \beta_N\right] = \frac{\partial \conn}{\partial x} = \left[ \frac{\partial \conn}{\partial x_1}, \dotsc, \frac{\partial \conn}{\partial x_N}  \right]
\end{equation}

As shown in~\cite{yang2010decentralized}, the $i$-th component of $\beta$ can be written as:
%
%Another constraint on the CBF is that it should be continuously differentiable. We can demonstrate that the chosen CBF respects this limitation, recalling the algorithm proposed in~\cite{yang2010decentralized} and considering only the $i$-th component of the vector:
\begin{equation}
\label{eq:der_conn}
%\begin{aligned}
\beta_i = \frac{\partial \conn}{\partial x_i}
 = \sum_{j \in \mathcal{N}_i}  \frac{\partial a_{i,j}}{\partial x_i} \left(v_2^i - v_2^j\right)^2  
% = \sum_{j \in \mathcal{N}_i}  \frac{\partial a_{i,j}}{\partial d_{i,j}} \frac{\partial d_{ij}}{\partial x_i} \left(v_2^i - v_2^j\right)^2
%\end{aligned}
\end{equation}
where $v_2^i$ and $v_2^j$ are the $i$-th and $j$-th component of the eigenvector associated to $\conn$, respectively.

We will now show that at least one component of $\beta$ is different from zero. This result is instrumental for the main results, that will be subsequently presented. We make the following assumption:
\begin{assum}\label{ass:collav}
	A collision avoidance mechanism is implemented, that prevents robots from colliding among each other.
\end{assum}
\begin{lemma}
	\label{lemma:beta_k} Consider a multi-robot system with communication topology defined according to the edge weights given in~\eqref{eq:weight}, and consider the definition of $\beta$ given in~\eqref{eq:beta}. Let the communication graph be connected. Then, a value \mbox{$k \in [1, \dotsc, N]$} always exists such that $\beta_k \neq 0$.
\end{lemma}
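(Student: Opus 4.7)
The plan is to argue by contradiction: assume $\beta_k = \mathbf{0}$ for every $k\in\{1,\dots,N\}$, and deduce that the Fiedler vector $v_2$ would then have to be constant on $V$, which is incompatible with its orthogonality to $\mathbf{1}$.

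To extract a useful scalar identity from the vector-valued $\beta_k$, I would contract with the positions themselves and study $\sum_{k=1}^N \beta_k \cdot x_k$. Under the contradiction hypothesis this sum is obviously zero. On the other hand, substituting \eqref{eq:der_conn} and reorganising the double sum over ordered pairs $(k,j)$ into a single sum over the undirected edges of $\mathcal{G}$ lets me exploit the antisymmetry $\partial a_{i,j}/\partial x_i = -\partial a_{i,j}/\partial x_j$, which is immediate from \eqref{eq:weight} since $a_{i,j}$ depends on $x_i,x_j$ only through the distance $d_{i,j}$. Combining the two contributions of each edge $(i,j)$ collapses the expression to a sum over $E$ of non-positive terms of the form $-\gamma_{ij}\|x_i-x_j\|^2(v_2^i-v_2^j)^2$, with $\gamma_{ij}>0$ whenever an edge is present (strict positivity because the weight in \eqref{eq:weight} vanishes at $d_{i,j}=R$, so edges correspond to $d_{i,j}<R$).

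From this closed form the contradiction is a short chain. The sum vanishes only if each summand does; Assumption \ref{ass:collav} rules out $x_i=x_j$ and $\gamma_{ij}\neq 0$ on edges, forcing $v_2^i=v_2^j$ on every edge of $\mathcal{G}$. Connectedness then propagates the equality to the whole vertex set, making $v_2$ a constant vector. But the Laplacian is symmetric and $v_2$ is associated with $\lambda_2\neq\lambda_1=0$, so $v_2$ must be orthogonal to the eigenvector $\mathbf{1}$ and therefore cannot be a nonzero constant---the desired contradiction.

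I do not foresee any substantive obstacle; the only delicate step is the edge-bookkeeping in the double-sum reorganisation, and a brief remark recording $\partial a_{i,j}/\partial x_i = \gamma_{ij}(x_j-x_i)$ with $\gamma_{ij}>0$ on edges should keep that part clean. Note also that the role of the collision-avoidance hypothesis is exactly to ensure $\|x_i-x_j\|>0$ on edges, so that the non-positivity in the key identity is strict whenever the Fiedler values disagree across that edge.
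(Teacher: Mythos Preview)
Your argument is correct and, in fact, cleaner than the one the paper gives. The key identity
\[
\sum_{k=1}^N \beta_k\cdot x_k \;=\; -\!\!\sum_{\{i,j\}\in E}\gamma_{ij}\,\|x_i-x_j\|^2\,(v_2^i-v_2^j)^2
\]
with $\gamma_{ij}>0$ on edges follows exactly as you describe from $\partial a_{i,j}/\partial x_i=\gamma_{ij}(x_j-x_i)$, and from there the chain ``sum of non-positives is zero $\Rightarrow$ every summand vanishes $\Rightarrow$ $v_2$ constant along edges $\Rightarrow$ $v_2$ globally constant by connectedness $\Rightarrow$ contradiction with $v_2\perp\mathbf{1}$'' is airtight. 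Assumption~\ref{ass:collav} enters exactly where you say, to guarantee $\|x_i-x_j\|>0$; the boundary case $d_{i,j}=R$ is harmless because $\partial a_{i,j}/\partial d_{i,j}$ vanishes there, so those terms disappear regardless.

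The paper takes a different, constructive route: it partitions the vertices into level sets of $v_2$, looks at the set $\overline E$ of edges joining distinct level sets (nonempty since $v_2\not\parallel\mathbf 1$), and then argues---working first in one dimension---that an \emph{extreme} vertex $k$ of $\overline{\mathcal G}$ has all the summands in $\beta_k$ of the same sign, hence $\beta_k\neq 0$. This has the merit of naming a specific index, but the reduction to the scalar case and the ``all signs agree at an extreme vertex'' step are a bit informal in $\mathbb R^n$. Your contradiction-via-energy identity sidesteps that entirely: it is dimension-independent, uses only the antisymmetry of $\partial a_{i,j}/\partial x_i$ and the monotonicity of the weight in~\eqref{eq:weight}, and makes the role of connectedness completely explicit. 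The trade-off is that you do not exhibit a witness $k$, but Lemma~\ref{lemma:beta_k} only asks for existence, so nothing is lost.
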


\begin{proof}
	For ease of notation the demonstration is carried on component-wise, considering $x_i \in \mathbb{R}$. All the results can be trivially extended to the multi-dimensional case.
	
	To prove the statement we first divide the communication graph $\mathcal{G}$ in $z$ subgraphs, where $z \geq 2$. The rule to divide the nodes in the subgraphs is their corresponding component inside the second eigenvector of the Laplacian matrix, namely $v_2$, which is the eigenvector associated to $\conn$. As discussed in Section~\ref{sec:conn}, the eigenvector associated to the null eigenvalue of the Laplacian matrix is the column vector of all ones, i.e., $\vect{1}$. Since the eigenvectors of a matrix are orthogonal, we know in particular that $v_2$ is orthogonal to $\vect{1}$, that is: \mbox{$v_2 \neq \rho \vect{1}$}, where \mbox{$\rho \in \mathbb{R} \setminus \{0\}$}. Then, defining $\gamma_1, \dotsc, \gamma_N \in \mathbb{R}$ as the components of $v_2$ (e.g., $\gamma_1 = v_2^{k}$ and $\gamma_2 = v_2^{j}$), we can always find at least two values such that $\gamma_1 \neq \gamma_2$. 
	
	We divide the communication graph $\mathcal{G}$ in $z$ subgraphs $\mathcal{G}_i$, where $i = 1, \dotsc, z$. Let:
	\begin{equation*}
	\begin{array}{lll}
	\mathcal{G}_1 = \{V_1, E_1\} & \dotsc & \mathcal{G}_z = \{V_z, E_z\} \\
	V_1 = \{ \omega \in V | v_2^{\omega} = \gamma_1 \} & \dotsc & V_z = \{ \omega \in V | v_2^{\omega} = \gamma_z\} \\
	E_1 = \{ e_{i,j} \in E | i,j \in V_1 \} & \dotsc & E_z = \{ e_{i,j} \in E | i,j \in V_z \}
	\end{array}
	\end{equation*}
%	Hence, \mbox{$\mathcal{G} = \bigcup_{i=1}^{z} \mathcal{G}_{i} \cup \overline{\mathcal{G}}$}, \mbox{$V = \bigcup_{i=1}^{z} V_{i}$} and \mbox{$E = \bigcup_{i=1}^{z} E_{i} \cup \overline{E}$} \bc{da mettere a posto la disposizione se si riesce}, where $\overline{\mathcal{G}}$ is the subgraph that contains the connection between the subgraphs $\mathcal{G}_i$:
%	\begin{equation*}
%	\begin{array}{l}
%	\overline{\mathcal{G}} = \{\overline{V}, \overline{E}  \} \\
%	\overline{E} = \{e_{i,j} \in E | (i \in V_s \land  j \in V_t) \lor (i \in V_t \land  j \in V_s)  \} 
%	\end{array}
%	\end{equation*} 
%	with $ s, t \in Z$, $Z= \left[1, \dotsc, z \right]$. $\overline{E}$ represents the set of edges that link the subgraphs, namely the edges whose adjacent vertices belong to different subgraphs (the set of these vertices is $\overline{V}$).
	Hence, \mbox{$\mathcal{G} = \bigcup_{i=1}^{z} \mathcal{G}_{i} \cup \overline{\mathcal{G}}$}, where $\overline{\mathcal{G}}= \{\overline{V}, \overline{E}  \}$, \mbox{$V = \bigcup_{i=1}^{z} V_{i}$} and \mbox{$E = \bigcup_{i=1}^{z} E_{i} \cup \overline{E}$} %\bc{da mettere a posto la disposizione se si riesce}
	. % is the subgraph that contains the connection between the subgraphs $\mathcal{G}_i$. 
	$\overline{E}$ represents the set of edges that link the subgraphs, namely the edges whose adjacent vertices belong to different subgraphs (the set of these vertices is $\overline{V}$):
	\begin{equation*}
	\overline{E} = \{e_{i,j} \in E | (i \in V_s \land  j \in V_t) \lor (i \in V_t \land  j \in V_s)  \} 
	\end{equation*} 
	with $ s, t \in Z$, $Z= \left[1, \dotsc, z \right]$.
	
	We define $i \in V$ an extreme node if:
	\begin{equation*}
	x_i < x_j \lor 	x_i > x_j, \forall j \neq i \in V
	\end{equation*}
	We know that two nodes always exist that verify this definition for~Assumption~\ref{ass:collav}.
	
	Consider now the subgraph $\overline{\mathcal{G}}$ and in particular one extreme node of it, namely the $k$-th vertex. For this vertex we demonstrate that $\beta_k \neq 0$. The term $\beta_k$ consists of a summation of terms, defined as the multiplication of two terms: $(v_2^k - v_2^j)^2$, and $\frac{\partial a_{k,j}}{\partial x_k}$. Considering the first term, namely $(v_2^k - v_2^j)^2$, we know that it is different from zero for all $j \neq k \in \overline{V}$, for the definition of the subgraphs. The second term $\frac{\partial a_{k,j}}{\partial x_k}$ can be divided in two additional terms $\left(\frac{\partial a_{k,j}}{\partial x_k} = \frac{\partial a_{k,j}}{\partial d_{k,j}}\frac{\partial d_{k,j}}{\partial x_k}\right)$. For $j \neq k \in \overline{V} \land j \in \mathcal{N}_k$, we can analyze the additional terms as follows, considering the definition of $a_{k,j}$ in~\eqref{eq:weight}:
	\begin{itemize}
		\item the term $\frac{\partial a_{k,j}}{\partial d_{k,j}}$ is different from zero, since $j\in\mathcal{N}_k$;
		\item the term $ \frac{\partial d_{k,j}}{\partial x_k}$ is different from zero if $d_{k,j}$ is different from zero, namely if the positions of robots $i$ and $j$ do not coincide. This is ensured according to Assumption~\ref{ass:collav}.
	\end{itemize}
	If the $k$-th vertex has just one neighbor in $\overline{\mathcal{G}}$, then the proof is completed. 
	
	If instead it has multiple neighbors in $\overline{\mathcal{G}}$ we need to consider the summation of multiple terms. In particular, we know that, if the summation is composed by terms of equal sign, then the summation will be different from zero. Considering the elements that build each addend, we know that the sign is defined by the difference between the position of the two vertices. %, namely $x_k - x_j$. 
	As $x_k$ is an extreme vertex then it will have all these terms with the same sign, which ends the demonstration.

\end{proof}

If we can ensure that the optimization problem~\eqref{eq:qp} admits always a solution, then we can guarantee that the multi-robot system stays connected during the desired task. 
 
\begin{theorem}
	\label{theorem:conn_constraint}
	Given the CBF described in~\eqref{eq:cbf}, consider the QP problem described in~\eqref{eq:qp}. Then, we can always find $u \in U$ that verifies the constraint~\eqref{eq:full_constraint} of the QP problem. 
\end{theorem}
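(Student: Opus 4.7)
The plan is to unpack the constraint in~\eqref{eq:full_constraint} using the single-integrator structure of the system, reduce the feasibility question to a statement about the gradient $\beta$, and then invoke Lemma~\ref{lemma:beta_k} to construct an explicit feasible input. Since $f(x)=\mathbb{O}$ and $g(x)=\mathbb{I}$, the Lie derivatives collapse to $L_f h(x)=0$ and $L_g h(x)=\partial \conn/\partial x = \beta$, so the QP constraint reduces to the scalar inequality
\begin{equation*}
\beta\, u + \bigl(\conn - \epsilon\bigr) \geq 0,
\end{equation*}
and the task becomes: produce, for every $x \in \mathcal{D}$, some $u \in U$ making this hold.

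Next I would split into two regimes. If $\conn(x) \geq \epsilon$, the trivial choice $u = \mathbb{O}$ satisfies the inequality immediately, since the constant term is already non-negative. Otherwise $\conn(x) < \epsilon$ and we need to produce strictly positive $\beta\, u$ of magnitude at least $\epsilon - \conn(x)$. The natural candidate is to steer along the gradient: set $u = s\,\beta^{T}$ for a scalar $s\geq 0$, which yields $\beta\, u = s\,\|\beta\|^{2}$. Choosing
\begin{equation*}
s \;=\; \frac{\epsilon - \conn(x)}{\|\beta\|^{2}}
\end{equation*}
makes the constraint hold with equality, provided $\|\beta\|^{2} > 0$.

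The non-trivial point, and the only real obstacle, is justifying $\|\beta\|^{2} > 0$ throughout $\mathcal{D}$. This is precisely what Lemma~\ref{lemma:beta_k} supplies: whenever the communication graph is connected (i.e.\ $x \in \mathcal{D}$), at least one component $\beta_k$ is nonzero, so $\|\beta\|^{2}>0$ and $s$ above is well-defined. Combining the two regimes gives a feasible $u(x) \in \mathbb{R}^{nN}$ for every $x \in \mathcal{D}$, which proves the theorem. I would close with a short remark that this construction also shows $u(x)$ can be taken to depend continuously on $x$ away from $\partial \mathcal{D}$, which will matter later when invoking the CLF-style convergence argument, but that regularity claim is not required for the present existence statement.
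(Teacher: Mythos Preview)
Your argument is correct and follows essentially the same route as the paper: both reduce the constraint to the linear inequality $\beta\,u \geq \epsilon - \conn$, and both appeal to Lemma~\ref{lemma:beta_k} to guarantee $\beta\neq 0$ so that $\beta\,u$ can be made as large as needed. The only cosmetic difference is that the paper fixes an index $k$ with $\beta_k\neq 0$ and solves for the single component $u_k$ (after absorbing the remaining components into a constant $\sigma_k$), whereas you take $u=s\,\beta^{T}$ along the full gradient; both constructions are equivalent once $\|\beta\|>0$ is established.
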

\begin{proof}
	Substituting $\beta$, defined in~\eqref{eq:beta}, in the constraint of the QP problem~\eqref{eq:full_constraint}, we obtain $\beta u \geq \epsilon - \conn$. 
	We note that, if we can find an input $u$ such that: %$\beta u \geq \epsilon$,
	\begin{equation}
	\label{eq:reduced_constraint}
	\beta u \geq \epsilon
	\end{equation}
	then this input will also satisfy the main constraint~\eqref{eq:full_constraint} because, from the characteristics of the Laplacian matrix, we know that $\conn \geq 0$ (Section~\ref{sec:conn}). \\
%	\begin{equation}
%	\text{if } \beta u \geq \epsilon \Rightarrow \beta u  \geq \epsilon - \conn  
%	\end{equation}
Inequality~\eqref{eq:reduced_constraint} can be rewritten in an extended version:
%	\begin{equation}
	$\beta_1u_1 + \beta_2u_2 + \dots  + \beta_nu_n \geq \epsilon$. 
%	\end{equation}
	We focus on the $k$-th element:
	\begin{equation}
		\label{eq:u_k}
		\beta_k u_k \geq \epsilon - \sum_{i=1, i \neq k} ^{ n}\beta_i u _i 
	\end{equation}
	Define $\sigma_k = \sum_{i=1, i \neq k} ^{ n}\beta_i u _i$ and rewrite~\eqref{eq:u_k} as:
	\begin{equation}
		\label{eq:u_k_constraint}
		\beta_k u_k \geq \epsilon - \sigma_k
	\end{equation}
	If we consider an unbounded input, namely $U = \mathbb{R}^m$, we can always find a component $u_k$ such that:
	\begin{equation}
	\label{eq:proof2}
	u_k \geq \frac{\epsilon - \sigma}{\beta_k} \text{ if } \beta_k > 0, \quad
	u_k \leq \frac{\epsilon - \sigma}{\beta_k} \text{ if } \beta_k < 0
	\end{equation}
The definition in~\eqref{eq:proof2} is not well defined for $\beta_k = 0$. However, according to Lemma~\ref{lemma:beta_k}, it is always possible to find at least one value $k$ for which $\beta_k \neq 0$.
	
Concluding, we can always find a value $k \in [1, \dotsc, N]$ for which $u_k$ verifies~\eqref{eq:u_k_constraint} and hence~\eqref{eq:full_constraint}.  
	
\end{proof}

We will now show that the candidate CBF given in~\eqref{eq:cbf} is a valid CBF.
\begin{lemma}\label{lemma:cbf}
	Function $h(x)$ introduced in~\eqref{eq:cbf} is a valid CBF.
\end{lemma}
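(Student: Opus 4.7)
The plan is to verify in turn the three requirements in Properties~\ref{prop:contdiff}--\ref{prop:existence_input} that characterize a valid CBF, using the candidate $h(x) = \conn(x) - \epsilon$, the single-integrator dynamics from~\eqref{eq:sinint} (so $f(x) = \mathbb{O}$ and $g(x) = \mathbb{I}$), and the machinery already established in Lemma~\ref{lemma:beta_k} and Theorem~\ref{theorem:conn_constraint}.

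For Property~\ref{prop:contdiff}, I would argue that $h(x)$ is continuously differentiable on $\mathcal{D}$. The edge weights $a_{i,j}$ in~\eqref{eq:weight} are continuously differentiable functions of the inter-robot distances, so the entries of the Laplacian $L(x)$ are continuously differentiable in $x$. Since by assumption the Laplacian has simple eigenvalues, classical perturbation theory for symmetric matrices guarantees that each simple eigenvalue (in particular $\conn(x)$) is a continuously differentiable function of $x$; subtracting the constant $\epsilon$ preserves differentiability.

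For Property~\ref{prop:reldeg}, I would simply compute the Lie derivatives with $f = \mathbb{O}$, $g = \mathbb{I}$: $L_f h(x) = 0$ and $L_g h(x) = \partial\conn/\partial x = \beta$, so $\dot h(x) = \beta\, u$. By Lemma~\ref{lemma:beta_k}, $\beta$ has at least one nonzero component on $\mathcal{D}$, hence the first derivative of $h$ depends explicitly on $u$ and the relative degree is one.

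For Property~\ref{prop:existence_input}, I would choose the same extended class $\mathcal{K}$ function used in Theorem~\ref{theorem:conn_constraint}, namely $\alpha(h(x)) = \varphi\, h(x)$ with $\varphi > 0$, so that condition~\eqref{eq:sup} reduces to the existence of $u \in U$ satisfying $\beta u + \varphi(\conn - \epsilon) \ge 0$, which is exactly the constraint~\eqref{eq:full_constraint} (up to the innocuous rescaling by $\varphi$). Theorem~\ref{theorem:conn_constraint} already guarantees that such a $u$ exists for unbounded $U$, so the supremum in~\eqref{eq:sup} is non-negative for every $x \in \mathcal{D}$, concluding that $h(x)$ is a valid CBF.

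The only nontrivial step is the first one, because differentiability of eigenvalues is not automatic for symmetric matrices with repeated spectra; however the paper's standing assumption that the Laplacian eigenvalues are simple (justified via the perturbation argument from~\cite{poignard2018spectra}) is exactly what makes this step immediate. The remaining two properties are essentially corollaries of results already proved earlier in the section, so I expect the proof to be short and to lean heavily on Lemma~\ref{lemma:beta_k} and Theorem~\ref{theorem:conn_constraint}.
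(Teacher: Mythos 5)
Your proof is correct and follows essentially the same route as the paper: continuous differentiability via the simple-eigenvalue assumption and smoothness of the edge weights, relative degree one by computing $\dot h = \beta u$, and Property~\ref{prop:existence_input} delegated to Theorem~\ref{theorem:conn_constraint}. Your explicit invocation of Lemma~\ref{lemma:beta_k} to guarantee $\beta \neq 0$ in the relative-degree step is a small refinement the paper omits (it merely writes $\dot h = \frac{\partial \conn}{\partial x} u$), but otherwise the two arguments coincide.
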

\begin{proof}
	In order to prove the statement, we need to show that $h(x)$ in~\eqref{eq:cbf} satisfies Properties~\ref{prop:contdiff}, \ref{prop:reldeg} and \ref{prop:existence_input} (Section~\ref{sec:cbf}).
	
	Property~\ref{prop:contdiff} requires $h(x)$ to be continuously differentiable. Considering the definition given in~\eqref{eq:cbf}, we can write
	\begin{equation*}\label{eq:dercbf}
	\frac{\partial h(x)}{\partial x} = \frac{\partial \conn}{\partial x}
	\end{equation*}
	
	As shown in~\cite{magnus1985differentiating}, simple eigenvalues of a real symmetric matrix are infinite times differentiable with respect to the variation of the element of the matrix. Considering the Laplacian matrix defined in Section~\ref{sec:conn}, such elements are the edge weights defined in~\eqref{eq:weight}, or their summation. The edge weights are by construction infinite times differentiable with respect to the state $x$ of the system. Hence, Property~\ref{prop:contdiff} is verified.

%	Continuity of~\eqref{eq:dercbf} is guaranteed by continuity of each of its component, given in~\eqref{eq:der_conn}. In order to assess this property, we need to guarantee continuity of the partial derivatives that are multiplied in~\eqref{eq:der_conn}. In particular:
%	%
%	%
%	%Hence, we need to verify that 
%	%We note that we should check if the two partial derivatives are continuous:
%	\begin{itemize}
%		\item $\frac{\partial a_{i,j}}{\partial d_{i,j}}$: $a_{i,j}$, given in~\eqref{eq:weight}, is an exponential function of the distance $d_{i,j}$, that is continuously differentiable. %The exponential is a function of class $\mathcal{C}^{\infty}$ (derivable infinite times), and hence continuously differentiable. 
%		The weight is a piecewise-defined function, but it is defined in such a way that its derivative is continuous also for $d_{i,j} = R$.
%		\item $ \frac{\partial d_{i,j}}{\partial x_i}$: the distance is continuously differentiable with respect to the position of the $i$-th robot.
%		\item \bc{Continuità di $v_2$} 
%	\end{itemize}
%	%In~\eqref{eq:der_conn}, $v_2^i$ and $v_2^j$ are the $i$-th and $j$-th component of the second eigenvector of the Laplacian matrix associated to $\mathcal{G}$ respectively.
	
	Property~\ref{prop:reldeg} requires $h(x)$ to be of relative degree one. This can be easily seen computing the first order time derivative of $h(x)$ in~\eqref{eq:cbf}, as follows:
	%
	%It is important to verify that the CBF is of relative-degree one, i.e., the first order time derivative of the CBF must depends on the control input. 
	%Given the CBF in~\eqref{eq:cbf} we can compute the first order time derivative:
	%\begin{equation}
	%\frac{d}{dt}h(x)  = L_f h(x) + L_g h(x)u =  \frac{\partial \conn}{\partial x} u
	%\end{equation}
	\begin{equation*}
	\frac{d}{dt}h(x)  = \frac{\partial \conn}{\partial x} \frac{d}{dt}x =  \frac{\partial \conn}{\partial x} u
	\end{equation*}
	
	Property~\ref{prop:existence_input} can be demonstrated according to Theorem~\ref{theorem:conn_constraint}.
\end{proof}

The chosen CBF~\eqref{eq:cbf} can also ensure an increase of $\conn$, till at least the threshold value $\epsilon$, if the system starts connected. This means that, if the system starts inside $\mathcal{D}$, then the input $u \in U$ that results from the QP problem will carry the system inside the set $\mathcal{C}$.
%, namely it will increase the value of the algebraic connectivity at least to the threshold value $\epsilon$.
\begin{theorem}
	\label{theorem:conn_lyapunov}
	Let $\Phi_{\mathcal{C}}(x): \mathcal{D} \rightarrow \mathbb{R}$ be:
	\begin{equation}
	\label{eq:clf}
		\Phi_{\mathcal{C}}(x) = 
		\begin{cases} 
			0 & \text{if } x \in \mathcal{C}\\
			- h(x)  & \text{if } x \in \mathcal{D} \setminus \mathcal{C}  
		\end{cases}
	\end{equation}
	Consider the CBF $h(x)$ defined in~\eqref{eq:cbf}. Then, $\Phi_{\mathcal{C}}(x)$ is a CLF for the system.
\end{theorem}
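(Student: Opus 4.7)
The plan is to verify that $\Phi_{\mathcal{C}}(x)$ satisfies the three requirements in the CLF definition of Section~\ref{sec:cbf}: continuous differentiability, positive definiteness with respect to the target set $\mathcal{C}$, and the existence of an input that drives the Lie derivative below $-\zeta(\Phi_{\mathcal{C}}(x))$ for some class $\mathcal{K}$ function $\zeta$. Positive definiteness is immediate from the construction: $\Phi_{\mathcal{C}}(x)=0$ on $\mathcal{C}$ by definition, while on $\mathcal{D}\setminus\mathcal{C}$ we have $0<\conn(x)<\epsilon$, hence $\Phi_{\mathcal{C}}(x)=\epsilon-\conn(x)>0$. The two branches agree on $\partial\mathcal{C}$ (both equal zero there, since $h(x)=0$), and the smoothness of $h$ on $\mathcal{D}$, already established in the proof of Lemma~\ref{lemma:cbf} via Property~\ref{prop:contdiff}, covers differentiability inside each region.

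Next, I would use the single-integrator structure $f(x)=\mathbb{O}$, $g(x)=\mathbb{I}$ to write the Lie derivatives explicitly: $L_f\Phi_{\mathcal{C}}(x)=0$, and $L_g\Phi_{\mathcal{C}}(x)=\partial\Phi_{\mathcal{C}}/\partial x$, which equals $\mathbb{O}$ on $\mathcal{C}$ and $-\beta$ on $\mathcal{D}\setminus\mathcal{C}$, with $\beta$ as in~\eqref{eq:beta}. On $\mathcal{C}$ the CLF inequality reduces to $0\le -\zeta(0)=0$, which holds trivially. The real content lies on $\mathcal{D}\setminus\mathcal{C}$, where the inequality becomes $-\beta u\le -\zeta(\epsilon-\conn)$, i.e.\ $\beta u\ge \zeta(\epsilon-\conn)$.

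This last condition is essentially the constraint of Theorem~\ref{theorem:conn_constraint} with the right-hand side $\epsilon$ replaced by the state-dependent quantity $\zeta(\epsilon-\conn)$. By Lemma~\ref{lemma:beta_k} there always exists an index $k$ with $\beta_k\ne 0$, so with unbounded input $U=\mathbb{R}^{nN}$ I can isolate $u_k$ exactly as in~\eqref{eq:u_k}--\eqref{eq:proof2} and pick its magnitude and sign so that $\beta_k u_k$ dominates $\zeta(\epsilon-\conn)-\sum_{i\ne k}\beta_i u_i$. A simple choice such as $\zeta(r)=r$ suffices, and in that case the argument is literally Theorem~\ref{theorem:conn_constraint} applied to $\Phi_{\mathcal{C}}$ instead of $h$.

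The subtlest point I anticipate is continuous differentiability at the boundary $\partial\mathcal{C}$: the inside gradient vanishes, whereas the outside gradient is $-\beta$, which Lemma~\ref{lemma:beta_k} shows is generically nonzero. I would handle this either by reading ``continuously differentiable'' in the weaker piecewise sense (the CLF inequality is then checked on each open region, and $\partial\mathcal{C}$ is traversed only inwards, so no dynamics are exercised across the gluing), or by noting that once the system reaches $\mathcal{C}$ the CBF constraint of Theorem~\ref{theorem:conn_constraint} takes over and keeps it forward invariant, so $\Phi_{\mathcal{C}}$ is only effectively evaluated on the branch $-h(x)$ along trajectories originating in $\mathcal{D}\setminus\mathcal{C}$. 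Either interpretation closes the argument and, combined with the previous steps, shows that $\Phi_{\mathcal{C}}$ meets every item of the CLF definition.
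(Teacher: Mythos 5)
Your proof is correct and follows essentially the same route as the paper: both reduce the CLF decrease condition on $\mathcal{D}\setminus\mathcal{C}$ to the satisfiability of the CBF constraint. The paper takes a slightly shorter path for the decrease part --- it simply invokes the already-enforced inequality $\dot{h}(x)\ge-\alpha\left(h(x)\right)$, so that $\dot{\Phi}_{\mathcal{C}}(x)=-\dot{h}(x)\le\alpha\left(h(x)\right)=-\varphi\,\Phi_{\mathcal{C}}(x)<0$ on $\mathcal{D}\setminus\mathcal{C}$, which is exactly your $\zeta(r)=\varphi r$ --- whereas you re-run the input-existence argument of Theorem~\ref{theorem:conn_constraint} via Lemma~\ref{lemma:beta_k}; the two are equivalent, since the existence of such a $u$ is what Theorem~\ref{theorem:conn_constraint} already guarantees. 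Where you go beyond the paper is the regularity of $\Phi_{\mathcal{C}}$ at $\partial\mathcal{C}$: the paper asserts continuous differentiability ``by construction,'' but as you observe the gradient of the branch $-h(x)$ is $-\beta$, which is generically nonzero on the boundary while the gradient of the zero branch vanishes, so $\Phi_{\mathcal{C}}$ is only continuous and piecewise $C^1$ there. Your two proposed fixes (checking the inequality on each open region, or noting that $\mathcal{C}$ is rendered forward invariant so the outer branch is only exercised before the boundary is reached) are both legitimate repairs of a point the paper glosses over, and this is the one place where your write-up is strictly more careful than the published proof.
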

\begin{proof}
	The candidate CLF~\eqref{eq:clf} is continuously differentiable and positive definite by construction. We should demonstrate that the orbital derivative $\dot{\Phi}_{\mathcal{C}}(x)$ is negative definite or semidefinite. We evaluate $\dot{\Phi}_{\mathcal{C}}(x)$:
	\begin{equation*}
		\dot{\Phi}_{\mathcal{C}}(x) = \begin{cases}
		0 & \text{if } x \in \mathcal{C}\\
		- \dot{h}(x)  & \text{if } x \in \mathcal{D} \setminus \mathcal{C} 
		\end{cases}
	\end{equation*}
	We focus on the case $x \in \mathcal{D} \setminus \mathcal{C} $:
	\begin{equation*}
	\dot{\Phi}_{\mathcal{C}}(x) = -\frac{\partial h(x)}{\partial x} \dot{x} = - \dot{h}(x) \leq \alpha\left(h(x)\right) 
	\end{equation*}
	where $\alpha(\cdot)$ is a $\mathcal{K}$ function, which is guaranteed by~\eqref{eq:sup}. \\
	Given the chosen CBF~\eqref{eq:cbf}, we know that:  
	\begin{equation*}
	\alpha\left(h(x)\right) < 0 \text{ if } x \in \mathcal{D} \setminus \mathcal{C}
	\end{equation*}
	Combining the set of inequalities above, we get:
	\begin{equation*}
	\dot{\Phi}_{\mathcal{C}}(x) < 0 
	\end{equation*}
	which ensures that the $\Phi_{\mathcal{C}}(x)$ is a CLF and asymptotically stabilizes the set $\mathcal{C}$.
%	
%	
%	
%	 $\Phi_{\mathcal{C}}(x*)= 0$ is asymptotically stable. It is worth to note that the $x* = x \in \mathcal{C}$, namely the set $\mathcal{C}$ is asymptotically stable.
 
%	This means that the set $\mathcal{C}$ is stable \bc{or asymptotically stable} and if the system starts connected, but under the given threshold $\epsilon$, the control Lyapunov function will lead the system in the set $\mathcal{C}$, namely it will increase the connectivity value above the given threshold.   
\end{proof}

\section{Evaluation}
In this section we evaluate the proposed method considering a few different applications. In all the simulations, performed in Matlab, we imposed the threshold value \mbox{$\epsilon=0.1$.} 

In order to avoid collisions among the robots, thus verifying Assumption~\ref{ass:collav}, along the lines of~\cite{egerstedt2018robot}, we included an additional CBF $h_{safety}$:
\begin{equation}
\label{eq:cbf_collision}
	h_{safety}(x_i, x_j) = d_{i,j}^2 - d_{min}^2
\end{equation}
where $d_{min} > 0$ represents the distance between the robots that we consider safe, based on their the moving and sensing capabilities. This additional CBF can be considered together with $h(x)$, defined in~\eqref{eq:cbf}, in a cumulative CBF:
\begin{equation}
	h_{tot} = h \land h_{safety} = \min \{ h, h_{safety}\}
\end{equation} 
The composition of CBF is studied in~\cite{glotfelter2017nonsmooth, egerstedt2018robot} and it will not be addressed in this paper, due to space limitations. %for lack of space. 

Extended results are shown in the attached video.

\subsection{Consensus}
%Qualche immagine + spiegazione \cite{ji2007distributed}
The first example is based on consensus, which is a standard control strategy for multi-robot system. It is usually referred to as an aggregative behavior, but in~\cite{ji2007distributed} it has been found that, in some configurations (Fig.~\ref{fig:consensus_1}), it can carry to a disconnection of the group. Fig.~\ref{fig:consensus_lambda_2} reports the behavior of $\conn$ with and without the CBF: it is easy to see how, without the constraint on the global connectivity, the consensus in this particular configuration leads to a disconnection of the group (Fig.~\ref{fig:consensus_2}). Fig.~\ref{fig:consensus_distance} reports the distance between the robots and it demonstrates that the collision avoidance, together with the connectivity maintenance, can be achieved with the proposed method. The final configuration is reported in Fig.~\ref{fig:consensus_3}.
\begin{figure}[tb]
	\renewcommand\thesubfigure{(\alph{subfigure})}
	\centering
	\begin{subfigure}{0.47\columnwidth}
		%		\centering
		\fbox{\includegraphics[width=\columnwidth]{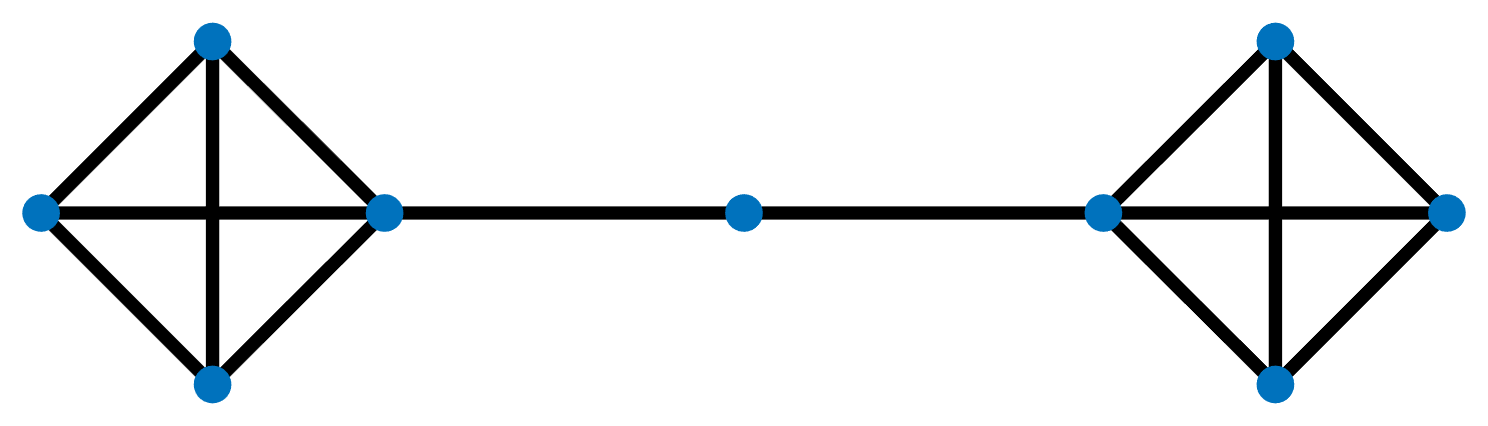}}
		\subcaption{$t = 0 \si{\second}$.}
		\label{fig:consensus_1}
	\end{subfigure}\\
	\begin{subfigure}{0.47\columnwidth}
		%		\centering
		\fbox{\includegraphics[width=\columnwidth]{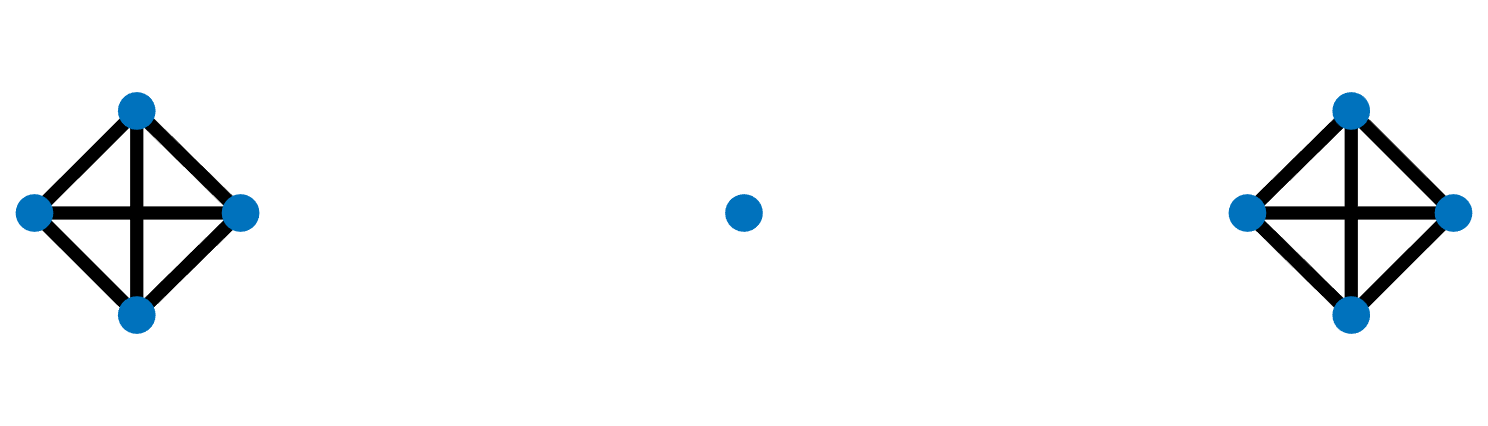}}
		\subcaption{$t = 0.5 \si{\second}$, $h(x)$ disabled.}
		\label{fig:consensus_2}
	\end{subfigure}\hspace{0.2cm}
	\begin{subfigure}{0.47\columnwidth}
		%		\centering
		\fbox{\includegraphics[width=\columnwidth]{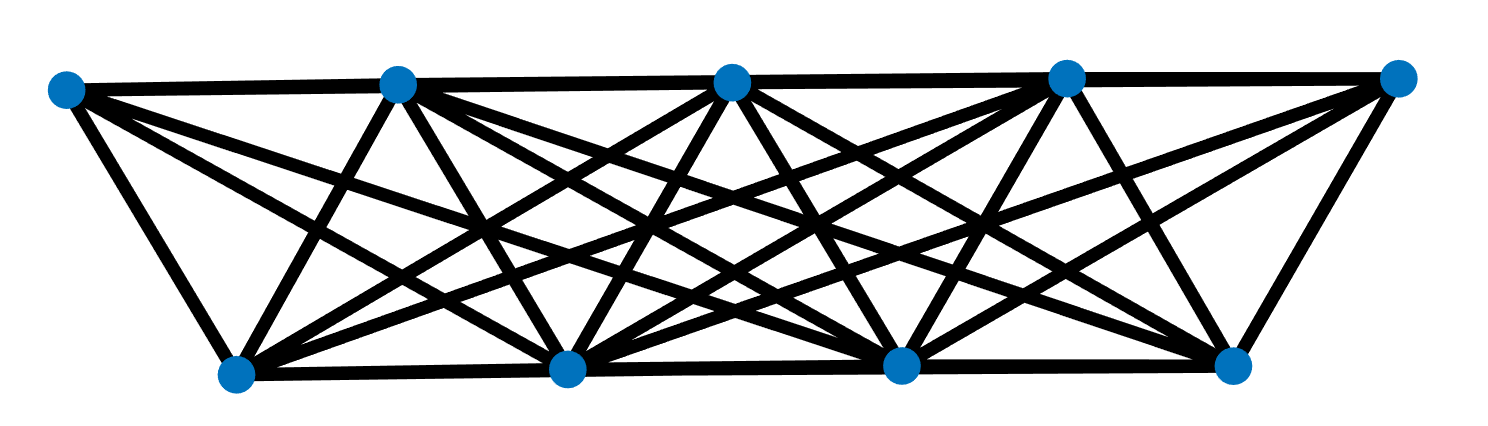}}
		\subcaption{$t = 20 \si{\second}$, $h(x)$ enabled.}
		\label{fig:consensus_3}
	\end{subfigure}
	\caption{Some snapshots of the consensus simulation.}
	\label{fig:snapshot}
\end{figure}
\begin{figure}[tb]
	\renewcommand\thesubfigure{(\alph{subfigure})}
	\centering
	\begin{subfigure}{0.52\columnwidth}
		\begin{tikzpicture}
		\begin{axis}[
		height = 4.6cm,
		width = 4.6cm,
		xlabel near ticks, 
		xlabel = {t [\si{\second}]},
		ylabel near ticks,  
		ylabel shift={-4pt},
		ylabel = {$\conn$ },
		ymin=-0.1, ymax=1.05,
		ytick={0, 0.5, 1},
		yticklabels = {$0$, $0.5$, $1$},
		xmin=0, xmax=20,
		legend pos=north west,
		legend image post style={only marks, mark=-},
		legend style={font=\footnotesize}
		]
		\addplot [line width=1.0pt, color=red] 
		table[x = x_pos, y = y_pos]
		{
			x_pos	y_pos
			0  0.1
			20  0.1
		};
		\addplot [line width=1.4pt, color=color1] table [x index=2,y index=0,col sep=comma] {data/consensus_lambda_2_all.csv};
		\addplot [line width=1.4pt, color=color6] table [x index=2,y index=1,col sep=comma] {data/consensus_lambda_2_all.csv};
		\legend{$\epsilon = 0.1$}
		
%		\addlegendentry{$h(x)$ enabled}
%		\addlegendentry{$h(x)$ disabled}
		\end{axis}
		\end{tikzpicture}
		\subcaption{Value of $\conn$ for: \color{color1}\ding{110} \color{black}$h(x)$ \\ enabled, \color{color6}\ding{110} \color{black}$h(x)$ disabled.}
		\label{fig:consensus_lambda_2}
	\end{subfigure}%\hspace{0.2cm}
	\begin{subfigure}{0.52\columnwidth}
		\begin{tikzpicture}
		\begin{axis}[
		height = 4.6cm,
		width = 4.6cm,
		xlabel near ticks, 
		xlabel = {t [\si{\second}]},
		ylabel near ticks, 
		ylabel shift={-4pt},
		ylabel = {$d_{i,j}$ [\si{\m}]},
		ymin=0, ymax=16,
		ytick={0, 5, 10, 15},
		yticklabels = { $0$, $5$, $10$, $15$},
		xmin=0, xmax=20,
		cycle list name=MyCyclelist	,
		legend image post style={only marks, mark=-},
		legend style={font=\footnotesize}
		]
		\addplot [line width=1pt, color=red] 
		table[x = x_pos, y = y_pos]
		{
			x_pos	y_pos
			0  1.5
			20  1.5
		};
		\foreach \i in {0,1,...,71} {\addplot+  table [x index=72,y index=\i,col sep=comma] {data/consensus_distance.csv}; }%
		%	\addplot [line width=1pt, color=color1] table [x index=2,y index=0,col sep=comma] {data/consensus_lambda_2_all.csv};
		%	\addplot [line width=1pt, color=color2] table [x index=2,y index=1,col sep=comma] {data/consensus_lambda_2_all.csv};
		%	
		\legend{$d_{min} = 1.5$}
		\end{axis}
		\end{tikzpicture}
		\subcaption{Distance between the neighboring robots.}
		\label{fig:consensus_distance}
	\end{subfigure}
\caption{Graphs of the consensus simulation.}
\label{fig:consensus}
\end{figure}
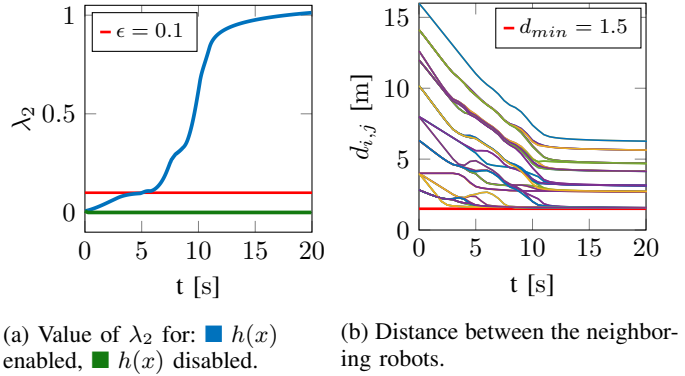

\subsection{Disconnecting control action}
%Quello che si spargono in giro
%Immagini di quello che succede + andamento connettività (con più robot) -> se ho numeri alti ancora meglio
In this section we want to demonstrate that the proposed method can maintain connectivity against a desired controller that tends to break down the group. The desired controller is defined for the $i$-th robot, where $i \in [1, \dotsc, N]$, as:
\begin{equation}
\label{eq:disgregative}
	u_{des}^i = \left[ 
	k \cos \left( \frac{2 \pi}{N + 1} i\right) \quad
	k \sin \left( \frac{2 \pi}{N + 1} i\right)
	 \right]^T
\end{equation}
where $k > 0$ is a tuning parameter. % to balance the amplitude of the controller.  

We report two main trials of this series of simulations. The first has 4 robots and the system starts with $0 < \conn < \epsilon$, namely $x(0) \in \mathcal{D} \setminus \mathcal{C}$. Fig~\ref{fig:disgregative_4_lambda_2} reports the trend of $\conn$ during the simulation and, as stated in Theorem~\ref{theorem:conn_lyapunov}, the CLF defined as~\eqref{eq:clf} makes the set $\mathcal{C}$ asymptotically stable and the chosen CBF can actually increase the connectivity value up to the desired threshold.

\begin{figure}[tb]
	\renewcommand\thesubfigure{(\alph{subfigure})}
	\centering
	\begin{subfigure}{0.52\columnwidth}
%		\vspace{-5pt}
		\begin{tikzpicture}
		\begin{axis}[
		height = 4.6cm,
		width = 4.6cm,
		xlabel near ticks, 
		xlabel = {t [\si{\second}]},
		ylabel near ticks,  
		ylabel shift={-4pt},
		ylabel = {$\conn$ },
		ymin=-0.05, ymax=0.35,
			ytick={0, 0.1, 0.2, 0.3},
			yticklabels = {$0$, $0.1$, $0.2$, $0.3$},
		xmin=0, xmax=20,
		legend image post style={only marks, mark=-},
		legend style={font=\footnotesize}
		]
		
		\addplot [line width=1.0pt, color=red] 
		table[x = x_pos, y = y_pos]
		{
			x_pos	y_pos
			0  0.1
			20  0.1
		};
		
		\addplot [line width=1.4pt, color=color1] table [x index=1,y index=0,col sep=comma] {data/disgregative_4_lambda_2.csv};
		
		\legend{$\epsilon = 0.1$}
		\end{axis}
		\end{tikzpicture}
		\subcaption{$N = 4$, $\conn(x(0)) < \epsilon$. }
		\label{fig:disgregative_4_lambda_2}
	\end{subfigure}%\hspace{0.2cm}
	\begin{subfigure}{0.52\columnwidth}
%		\vspace{5.5pt}
		\begin{tikzpicture}
		\begin{axis}[
		height = 4.6cm,
		width = 4.6cm,
		xlabel near ticks, 
		xlabel = {t [\si{\second}]},
		ylabel near ticks,  
		ylabel shift={-4pt},
		ylabel = {$\conn$},
		ymin=-0.5, ymax=4.5,
		ytick={0, 2, 4},
		yticklabels = {$0$, $2$, $4$},
		xmin=0, xmax=20,
		legend image post style={only marks, mark=-},
		legend style={font=\footnotesize}	
		]
		
		\addplot [line width=1.0pt, color=red] 
		table[x = x_pos, y = y_pos]
		{
			x_pos	y_pos
			0  0.1
			20  0.1
		};
	
		\addplot [line width=1.4pt, color=color1] table [x index=1,y index=0,col sep=comma] {data/disgregative_40_lambda_2.csv};
		\legend{$\epsilon=0.1$}
		\end{axis}
		\end{tikzpicture}
		\subcaption{$N = 40$, $\conn(x(0)) > \epsilon$.}
		\label{fig:disgregative_40}
	\end{subfigure}
	\caption{Value of $\conn$ in simulations with disconnecting $u_{des}$~\eqref{eq:disgregative}.}
	\label{fig:disgregative}
\end{figure}
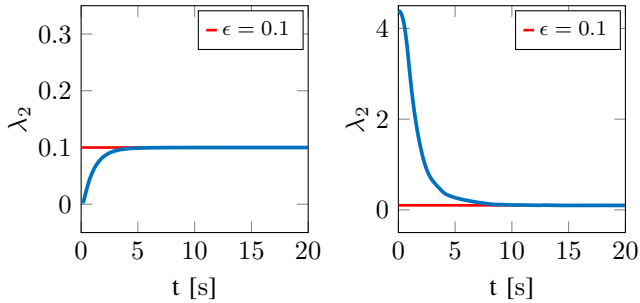

In the second example, the system, composed of 40 robots, starts inside the set $\mathcal{C}$. Fig.~\ref{fig:disgregative_40} reports the behavior of $\conn$.%, and Fig.~\ref{fig:snap_40} depicts some snapshot of the simulation. 
%\bc{in realtà funzion anche la collision avoidance, ma non si capisce molto nel plot, quindi non l'ho messo} \ls{ok}

\subsection{Coverage}
%Andamento connettività + andamento HC 
%Forma della distribuzione con h() diverso!!
In many typical applications of multi-robot system, the main objective is searching or monitoring a certain quantity or data with ad-hoc sensors. Usually the area $\Upsilon \in \mathbb{R}^n$ in which the robots are deployed has some points of interest, e.g., a higher probability to find victims in a search and rescue scenario or to measure the quantity that should be monitored. This different importance of the zones is usually described by means of a density function associated to the domain~\cite{cortes2004coverage, cortes2010coverage}, namely $\phi(\cdot) : \Upsilon \rightarrow \mathbb{R}_0^+$. These density functions can be defined for robot with different sensors~\cite{santos2018coverage}, exploiting heterogeneity in multi-robot systems, or time-varying~\cite{lee2015multirobot}, to consider the evolution of the area during the task. However, spreading robots over an area can lead to loss of connectivity.

%In these kind of applications, it is important to maintain connectivity between the robots in such a way that they can share information and work as a team. However, the action itself of spreading the robots over an area is already a disruptive action and, without an appropriate controller, it can lead to a disconnection of the system. 

In this simulation, we solved the QP problem~\eqref{eq:qp} using, as desired input, the one coming from a Voronoi-based coverage algorithm. In particular, we chose to implement the standard Lloyd approach~\cite{lloyd1982least, cortes2004coverage}, which defines the input for the $i$-th robot as 
$
u_{des}^i = -k \left( x_i - c_i \right)
$, 
where $k > 0$ is a tuning parameter, 
%where $k$ %has the same definition% 
%as in~\eqref{eq:disgregative} 
and $c_i$ is the center of mass of the $i$-th Voronoi cell $\mathcal{V}_i$:
\begin{equation*}
\mathcal{V}_i(x) = \{q \in \Upsilon | \|q- x_i \| \leq \| q - x_j \|, \forall j \in \left[1, \dotsc, N \right] \}
\end{equation*}

In addition, we report a comparison with a local connectivity approach, used in~\cite{egerstedt2018robot, wenhao2019}. The approach proposed in these works is based on finding a CBF $h_{local}$ that defines the desired set as the set that maintains all the initial links:
\begin{equation}
 h_{local}(x_i, x_j) = R^2 - d_{i,j}^2 
\end{equation} 
This CBF, as $h_{safety}$~\eqref{eq:cbf_collision}, is defined for each pair of connected robots and not for the whole system as, vice-versa, our $h(x)$~\eqref{eq:cbf}.
The comparison is based on the locational cost~\cite{cortes2004coverage}, which is usually used to measure the efficiency of the coverage algorithms (lower value indicates better performance):
\begin{equation}
\label{eq:locational_cost}
\mathcal{H}(x) = \sum_{i = 1}^{N} \int_{\mathcal{V}_i(x)} \|q - x_i \|^2\phi(q)dq
\end{equation}
%where $\phi(q) : \Upsilon \rightarrow \mathbb{R}_0^+$ is a generic density function.

We report, in the following, a significant example that
% identified as significant because they 
highlights the differences between the proposed method and the existing one. The simulation was conducted with 4 robots, starting from a connected configuration above the given threshold ($x(0) \in \mathcal{C}$). Fig.~\ref{fig:coverage_lambda_2} reports the behavior of the algebraic connectivity and, as might be expected, the proposed method keeps the value above the threshold, while the trial with $h_{local}$ keeps the group connected, but with a lower value of $\conn$. In addition, the proposed method allows to obtain a lower value of locational cost~\eqref{eq:locational_cost}, which is compared to the one obtained with $h_{local}$ in Fig.~\ref{fig:coverage_HC}. 
%This result is more evident when the variance is different between the two main axes, namely the density function has an oval shape, which can represent for example the necessity of searching a corridor or a section of a river. The behavior of the algebraic connectivity and of the locational cost are reported in Fig.~\ref{fig:coverage_2_lambda_2} and Fig.~\ref{fig:coverage_2_HC}, respectively. 

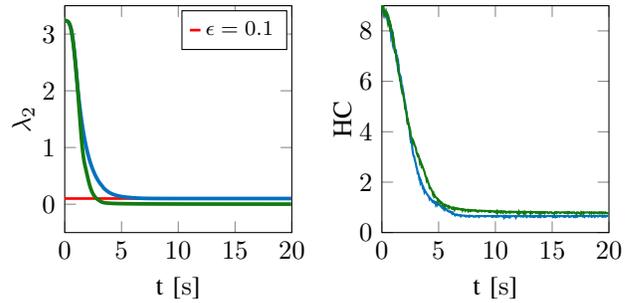
\begin{figure}[tb]
	\renewcommand\thesubfigure{(\alph{subfigure})}
	\centering
	\begin{subfigure}{0.48\columnwidth}
		\begin{tikzpicture}
		\begin{axis}[
		height = 4.6cm,
		width = 4.6cm,
		xlabel near ticks, 
		xlabel = {t [\si{\second}]},
		ylabel near ticks,  
		ylabel shift={-4pt}, 
		ylabel = {$\conn$},
		ymin=-0.5, ymax=3.5,
		ytick={ 0, 1, 2, 3},
		yticklabels = { $0$, $1$, $2$, $3$},
		xmin=0, xmax=20,
		legend image post style={only marks, mark=-},
		legend style={font=\footnotesize}	
		]
		
		\addplot [line width=1pt, color=red 	] 
		table[x = x_pos, y = y_pos]
		{
			x_pos	y_pos
			0  0.1
			20  0.1
		};
	
		\addplot [line width=1.4pt, color=color1] table [x index=4,y index=0,col sep=comma] {data/coverage_all.csv};
		\addplot [line width=1.4pt, color=color6] table [x index=4,y index=2,col sep=comma] {data/coverage_all.csv};
		\legend{$\epsilon=0.1$}
		\end{axis}
		\end{tikzpicture}
		\subcaption{Value of $\conn$.}
		\label{fig:coverage_lambda_2}
	\end{subfigure}
	\begin{subfigure}{0.48\columnwidth}
		\begin{tikzpicture}
		\begin{axis}[
		height = 4.6cm,
		width = 4.6cm,
		xlabel near ticks, 
		xlabel = {t [\si{\second}]},
		ylabel near ticks,  
		ylabel shift={-4pt},
		ylabel = {HC},
		ymin=0, ymax=9,
		xmin=0, xmax=20,	
		]
		\addplot [line width=0.8pt, color=color1] table [x index=4,y index=1,col sep=comma] {data/coverage_all.csv};
		\addplot [line width=0.8pt, color=color6] table [x index=4,y index=3,col sep=comma] {data/coverage_all.csv};
		
		\end{axis}
		\end{tikzpicture}
		\subcaption{Value of $H(x)$.}
		\label{fig:coverage_HC}
	\end{subfigure}
	\caption{Coverage simulation with comparison between: \\ \color{color1}\ding{110} \color{black}$h(x)$ and \color{color6}\ding{110} \color{black}$h_{local}$.}
	\label{fig:coverage}
\end{figure}

\section{Conclusion}
In this paper we propose a method to incorporate, in an optimal way, connectivity maintenance and a desired controller. The method is based on the definition of a Control Barrier Function that ensures the maintenance of the algebraic connectivity above a given value. The simulations show the feasibility and the flexibility of the method over different typical tasks. It is important to highlight that the proposed method was developed considering a centralized implementation. About the future work, we are going to implement the proposed method in a decentralized fashion. Starting from the works presented in~\cite{borrmann2015control, wang2017safety}, where a decentralized version of the CBF is introduced, and building upon the results in~\cite{sabattini2013decentralized} for decentralized estimation of the algebraic connectivity.
  
%
%
%\clearpage

%\newpage
\bibliographystyle{IEEEtran}
\bibliography{biblio}

\end{document}